\newcommand\ket[1]{\ensuremath{|#1\rangle}}
\newcommand\bra[1]{\ensuremath{\langle#1|}}
\newcommand\tr{\mathop{\rm Tr}\nolimits}
\newcommand\Tr{\mathop{\rm Tr}\nolimits}
\newcommand\rank{\mathop{\rm rank}\nolimits}
\newcommand\supp{\mathop{\rm supp}\nolimits}
\newtheorem{theorem}{Theorem}
\newtheorem{corollary}{Corollary}
\newtheorem{remark}{Remark}
\newtheorem{problem}{Problem}
\newtheorem{example}{Example}
\begin{document}

\ifdefined\lineno
    \linenumbers
\else
\fi


\title{Rank Reduction for the Local Consistency Problem}

\author{Jianxin Chen}%
\affiliation{Department of Mathematics \& Statistics, University of
  Guelph, Guelph, Ontario, Canada}%
\affiliation{Institute for Quantum Computing, University of Waterloo,
  Waterloo, Ontario, Canada}%
\author{Zhengfeng Ji}%
\affiliation{Institute for Quantum Computing, University of Waterloo,
  Waterloo, Ontario, Canada}%
\affiliation{State Key Laboratory of Computer Science, Institute of Software, Chinese Academy of Sciences, Beijing, China}%
\author{Alexander Klyachko}
\affiliation{Department of Mathematics, Bilkent University, Bilkent, Ankara, Turkey}
\author{David W. Kribs}%
\affiliation{Department of Mathematics \& Statistics, University of
  Guelph, Guelph, Ontario, Canada}%
 \affiliation{Institute for Quantum Computing, University of Waterloo,
  Waterloo, Ontario, Canada}%
\author{Bei Zeng}%
\affiliation{Department of Mathematics \& Statistics, University of
  Guelph, Guelph, Ontario, Canada}%
\affiliation{Institute for Quantum Computing, University of Waterloo,
  Waterloo, Ontario, Canada}%

\begin{abstract}
We address the problem of how simple a solution can be for a given quantum local consistency instance. More specifically, we investigate how small the rank of the global density operator can be if the local constraints are known to be compatible. We prove that any compatible local density operators can be satisfied by a low rank global density operator. Then we study both fermionic and bosonic versions of the $N$-representability problem as applications. After applying the channel-state duality, we prove that  any compatible local channels can be obtained through a global quantum channel with small Kraus rank.
\end{abstract}

\date{\today}

\pacs{03.65.Ud, 03.67.Mn, 89.70.Cf}

\maketitle

\section{Introduction} \label{sec:pre}

Understanding the various correlations and relationships amongst the different parts of a many body quantum system is one of the most difficult challenges in quantum  theory.  It is well known that the reduced density operators defined by partial traces characterize subsystems. Consider a system of three parties $A$, $B$ and $C$: If all three two-particle density operators $\rho^{AB}$, $\rho^{AC}$ and $\rho^{BC}$ are consistent with some global density operator $\rho^{ABC}$, they must satisfy $\tr_A(\rho^{AB})=\tr_C(\rho^{BC})$, $\tr_B(\rho^{AB})=\tr_C(\rho^{AC})$ and $\tr_A(\rho^{AC})=\tr_B(\rho^{BC})$. This is a necessary but not sufficient condition for the existence of $\rho^{ABC}$. As the particle number $N$ becomes very large, the correlations between local subsystems become much more complicated. In general, local consistency is the problem of deciding whether a given collection of subsystem descriptions is consistent with some state of the global system, or the problem of finding necessary and sufficient condition for consistency of subsystem descriptions. It is also called the quantum marginal problem in literature\cite{K04,K05, AK08}.
The community observed the relation between the spectrum of bipartite quantum states and certain representations of the symmetric groups very recently. The consistency conditions for some special classes of quantum states were then given in \cite{K04,K05,AK08, CM06}. For general states with overlapping margins, the situation remains unclear.

If the particles under consideration are fermions instead of qubits, the local consistency problem has been known as the $N$-representability problem, which arose initially in the 1960's in connection with calculating the ground-state energies of general interacting electrons\cite{C63}.

It was only recently shown in \cite{L06, LCV07}  that both deciding the local consistency  and deciding the local consistency for fermions are QMA-complete, meaning both the consistency problem and the $N$-representability problem are computationally at least as hard as any other problem in the complexity class QMA. Here, QMA is the quantum analogue of the complexity class NP. Consequently, it is unlikely to have efficient algorithms for local consistency problems, even on a quantum computer. And very recently, Wei, et al. proved that the bosonic version of the $N$-representability problem is also QMA-complete\cite{WMN10}.

Though the local consistency problem is theoretically hard in the worst case, it is still worth exploring the potential solutions. There are various approaches scattered through the literature on this subject. Linden, et al. proved that almost every three-qubit pure state can be uniquely determined among all states by their two-party reduced states\cite{LW02}. A related  fermionic version was discussed in \cite{S65}, whose results indicate that almost any three fermion pure state is uniquely determined among all states by their two-particle reduced states, though it was not stated explicitly.  Linden et al.'s result was later generalized to $N$-particle systems\cite{JL05}.

In this paper, we will focus on another direction of the local consistency problem. We are interested in how simple the solution can be, or more specifically,  how small the rank of a  solution can be. The same question for bipartite quantum system without overlapping margins was discussed in \cite{K04}, but their approach seems technically difficult to generalize to the case with overlapping margins. In this work, we consider this problem regarding the rank of the solution in a very general setting, for multipartite quantum systems with overlapping margins. We provide a rank reduction based approach. We also show that some useful results from convex analysis can be applied to this problem directly, though it leads to a slightly weaker bound than ours\footnote{ See Remark~\ref{convex} for details.}. Then we will apply our results to ferminonic and bosonic systems. Finally local consistency problem for quantum channels will be addressed.

We now state our main result. For a given finite-dimensional Hilbert space $\mathcal{H}$, $B(\mathcal{H})$ will denote the space of bounded linear operators acting on $\mathcal{H}$. For the $n$-fold tensor product $\mathcal{H}^{{\otimes n}}$ and any integer $i\leq n$, $X_{i}$ and $Z_{i}$ are general Pauli $X$-gate and general Pauli $Z$-gate for $i$-th qudit respectively. Formally, the local consistency problem can be stated as follows.

\begin{problem}
Consider a multipartite quantum system $\mathcal{A}_1\mathcal{A}_2\cdots \mathcal{A}_n$ with Hilbert space $\mathcal{H}_{\mathcal{A}_{1}\mathcal{A}_{2}\cdots \mathcal{A}_{n}}=\mathcal{H}_{\mathcal{A}_{1}}\otimes \mathcal{H}_{\mathcal{A}_{2}}\otimes \cdots \otimes \mathcal{H}_{\mathcal{A}_{n}}$.  Given a set of reduced density operators $\rho_i(i=1,2,\cdots,k)$  where each $\rho_i$ acts on a subsystem $I_{i}$ of $\mathcal{A}_1\mathcal{A}_2\cdots\mathcal{A}_n$, or $I_{i}\subseteq \{\mathcal{A}_{1}, \mathcal{A}_{2},\cdots,\mathcal{A}_{n}\}$. The local consistency problem is to address the existence of a global density operator $\rho \in B(\mathcal{H}_{\mathcal{A}_{1}\mathcal{A}_{2}\cdots \mathcal{A}_{n}})$ satisfying $\tr_{I_{i}^{c}}\rho=\rho_i$ for any $1\leq i\leq k$. $I_{i}^{c}$ here is the complement of $I_{i}$.
\end{problem}

An \emph{instance} of the local consistency problem is a collection of pairs: a reduced density operator $\rho_{i}$ and corresponding subsystem $I_{i}$. For any given instance $\{(\rho_{i}, I_{i})\}_{i=1}^{k}$ of the local consistency problem, if there is a global density operator $\rho \in B(\mathcal{H}_{\mathcal{A}_{1}\mathcal{A}_{2}\cdots \mathcal{A}_{n}})$ satisfying $\tr_{I_i^{c}}\rho=\rho_i$ for any $1\leq i\leq k$, then we say $\{(\rho_{i}, I_{i})\}_{i=1}^{k}$ is a \textit{compatible} instance.

In this paper, we will show that  compatible local density operators are reduced states of some simple (low rank) global density operator. More specifically, we have following theorem.

\begin{theorem}\label{thm:main}
For any compatible instance of local consistency problem $\{(\rho_{i}, I_{i})\}_{i=1}^{k}$,  there is a solution with rank no more than
$\textstyle \sqrt{\sum_{i=1}^{k}{(\rank\rho_{i})^2}}$.
\end{theorem}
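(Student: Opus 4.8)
The plan is to start from an arbitrary compatible solution $\rho$ and to repeatedly reduce its rank until the bound is met, using a dimension-counting argument at each step. The key observation is that the constraints $\tr_{I_i^c}\rho = \rho_i$ form a system of affine-linear equations on the real vector space of Hermitian operators on $\mathcal{H}_{\mathcal{A}_1\cdots\mathcal{A}_n}$; if we have a feasible $\rho$ whose rank is strictly larger than the target, we want to find a nonzero Hermitian perturbation $\Delta$ supported on $\supp\rho$ such that $\tr_{I_i^c}\Delta = 0$ for all $i$, and then move along $\rho + t\Delta$ until an eigenvalue hits zero — which strictly drops the rank while staying feasible (and staying positive semidefinite, since $\Delta$ lives on the support of $\rho$). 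The existence of such a $\Delta$ is guaranteed by comparing dimensions: the space of Hermitian operators supported on an $r$-dimensional support has real dimension $r^2$, while the constraint $\tr_{I_i^c}\Delta = 0$ is a map into the space of Hermitian operators on $\mathcal{H}_{I_i}$, whose dimension is $(\dim\mathcal{H}_{I_i})^2$. Crucially, though, the relevant target space is not all of $B(\mathcal{H}_{I_i})_{\mathrm{herm}}$ but only the part reachable as a partial trace of something on $\supp\rho$ — and more to the point, we only need $\tr_{I_i^c}\Delta$ to vanish, and the image $\tr_{I_i^c}(\supp\rho)$ has dimension at most $(\rank\rho_i)^2$ once the rank of $\rho$ has been reduced appropriately, since $\tr_{I_i^c}\Delta$ is supported on $\supp\rho_i$.

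So the induction hypothesis to carry is: \emph{we may assume $\supp(\tr_{I_i^c}\rho_{\mathrm{current}}) \subseteq \supp\rho_i$ for every $i$}, which holds automatically since $\tr_{I_i^c}\rho_{\mathrm{current}} = \rho_i$ is maintained throughout. Then $\tr_{I_i^c}\Delta$, for $\Delta$ Hermitian on $\supp\rho$, is a Hermitian operator supported on $\supp\rho_i$, hence lives in a space of real dimension $(\rank\rho_i)^2$. The total codimension of the constraint system is therefore at most $\sum_{i=1}^k (\rank\rho_i)^2$. If $r = \rank\rho$ satisfies $r^2 > \sum_i (\rank\rho_i)^2$, the space of feasible Hermitian directions $\Delta$ on $\supp\rho$ is nontrivial, so we can reduce. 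Iterating, we terminate with a feasible $\rho$ of rank $r$ satisfying $r^2 \le \sum_i (\rank\rho_i)^2$, i.e. $\rank\rho \le \sqrt{\sum_i (\rank\rho_i)^2}$.

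The main technical point — and the step I would be most careful about — is the claim that $\tr_{I_i^c}\Delta$ is supported on $\supp\rho_i$ when $\Delta$ is supported on $\supp\rho$. This should follow from the fact that $\supp(\tr_{I_i^c}\rho) = \supp\rho_i$ together with an operator-monotonicity/support argument: if $\Delta$ is Hermitian with $-c\rho \le \Delta \le c\rho$ for some $c>0$ (which we may arrange by scaling, since $\supp\Delta \subseteq \supp\rho$), then $-c\,\tr_{I_i^c}\rho \le \tr_{I_i^c}\Delta \le c\,\tr_{I_i^c}\rho$ because partial trace is positive, forcing $\supp(\tr_{I_i^c}\Delta) \subseteq \supp(\tr_{I_i^c}\rho) = \supp\rho_i$. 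The rest is routine: choosing the scalar $t$ in $\rho + t\Delta$ to be the reciprocal of the largest-magnitude eigenvalue of $\rho^{-1/2}\Delta\rho^{-1/2}$ (restricted to $\supp\rho$) of the appropriate sign so that $\rho + t\Delta \ge 0$ with strictly smaller rank, and checking the process terminates after finitely many steps since the rank is a strictly decreasing positive integer. I would also remark that one could instead invoke the characterization of extreme points of the (compact convex) feasible set and bound their rank directly, which is the convex-analysis route alluded to in the introduction; the iterative argument above is essentially an explicit proof that extreme-point-type solutions have the stated rank bound.
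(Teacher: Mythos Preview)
Your proposal is correct and follows essentially the same rank-reduction argument as the paper: start from any feasible $\rho$, use the support inclusion $\supp(\tr_{I_i^c}\Delta)\subseteq\supp\rho_i$ (proved via $-c\rho\le\Delta\le c\rho$ and positivity of partial trace, just as in the paper) to bound the codimension of the feasible-direction space by $\sum_i(\rank\rho_i)^2$, extract a nonzero Hermitian perturbation whenever $r^2>\sum_i(\rank\rho_i)^2$, and push along it until an eigenvalue hits zero. The only cosmetic difference is that the paper first finds a complex $X\in B(\supp\rho)$ with $\Pi_i(\tr_{I_i^c}X)\Pi_i=0$ and then passes to $X+X^\dagger$ or $i(X-X^\dagger)$, whereas you work directly in the real space of Hermitian operators---the dimension count and the mechanism are identical.
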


We are primarily interested in instances where the whole system is $\mathcal{H}^{\otimes n}$ and only no more than $c$-particle reduced states are known. In this case, the number of reduced states should be no more than ${n \choose c}$, the rank is bounded by polynomial in $n$ while the rank of a general density operator should be exponential in $n$.

The paper is structured as follows. After introducing the requisite background material  in section~\ref{sec:pre}, we give a proof of above theorem in section~\ref{sec:proof}. We then apply these ideas to the $N$-representability problem in section~\ref{sec:nrep}. In section~\ref{sec:channel} we introduce the notion of consistency of quantum channels.  Some examples are illustrated in section~\ref{sec:example}.

\section{Proof of Main Theorem}\label{sec:proof}
\begin{proof}[Proof of Theorem~\ref{thm:main}]
Since the local density operators of this instance are known to be compatible, we can start from some solution $\rho$, which is a density operator acting on $\mathcal{H}_{\mathcal{A}_{1}\mathcal{A}_{2}\cdots \mathcal{A}_{n}}$ and $\tr_{I_i^{c}}\rho=\rho_i$  for any $1\leq i\leq k$. Using a spectral decomposition of $\rho$ we may write
\begin{eqnarray}
\rho=\sum_{j=1}^r p_j\ket{\psi_j}\bra{\psi_j}
\end{eqnarray}
where $r=\rank(\rho)$ and $\ket{\psi_{1}},\ket{\psi_{2}},\cdots,\ket{\psi_{r}}$ are mutually orthogonal unit vectors.  We will show that when $r$ is large, then we can find another solution $\rho^{\prime}$ to the same instance and $\rank(\rho^{\prime})<\rank(\rho)$.

Consider the spectral decompositions again and write

\begin{eqnarray}
\rho_{i}=\sum_{j=1}^{r_{i}} p_{j}^{(i)} \ket{\psi_{j}^{(i)}}\bra{\psi_{j}^{(i)}} \textrm{ \  for\ all \ }  1\leq i\leq k
\end{eqnarray}
where $r_{i}={\rank} \rho_{i}$ and $\ket{\psi_{1}^{(i)}},\ket{\psi_{2}^{(i)}},\cdots,\ket{\psi_{r_i}^{(i)}}$ are mutually orthogonal unit vectors for any $i$.

Consider the following set:
\begin{align}
\mathcal{M} =\{X\in B(\supp \rho):  \Pi_{i}(\tr_{I_i^c}X) \Pi_{i}=O  \textrm{\ for\ any\ }  1\leq i\leq k\}.
\end{align}
Here, for each $i$, $\Pi_{i}$ is a projector on $\supp \rho_{i}$.

Now, $\mathcal{M}$ is a subspace of dimension at least $r^2-\sum_{i=1}^{k}{\rank^2}{\Pi_i}$.

If $r^2-\sum_{i=1}^{k}{\rank^2}{\Pi_i}\geq 1$, then
${\mathcal{M}}$ is not empty. Let's say there is a non-zero $X\in \mathcal{M}$  which implies $X^{\dagger}\in \mathcal{M}$ too. Thus both $H_1=X+X^{\dagger}$ and $H_2=i(X-X^{\dagger})$ are traceless Hermitian operators in $\mathcal{M}$. Note that $H_1$ and $H_2$ cannot be zero simultaneously when $X$ is not the zero operator. Without loss of generality, let us assume $H=H_{1}(\textrm{or\ } H_{2})$ is non-zero.

$H$ is chosen from $B(\supp \rho)$, or equivalently, there is some $\epsilon$ such that $\rho\pm \epsilon H \geq 0$ which follows $\rho_i\pm \epsilon \tr_{I_i^c}(H) \geq 0$ for any $i$. 
Thus Hermitian operator  $\tr_{I_i^c}(H)$ lies completely in $B(\supp \rho_{i})$  that implies $\Pi_{i}(\tr_{I_i^c}H) \Pi_{i}=\tr_{I_i^c}H$ and then $\tr H=\tr \tr_{I_i^c}H=0$.

Since the operator $H\ne 0, \Tr H=0$
contains both positive and negative eigenvalues, the same holds for
$\rho-\lambda H$ for $\lambda>>1$. Hence there exists an intermediate value
$0<\lambda<\infty$ for which the operator $\rho-\lambda H$ is nonnegative,
but not strictly positive, i.e. $\rho-\lambda H$ is a degenerate density
matrix we are looking for.


Now, for any solution $\rho$ to a given instance $\{\rho_{i},\mathcal{I}_{i}\}_{i=1}^{k}$, if
$r^2-\sum_{i=1}^{k}{\rank^2}{\Pi_i}\geq 1$, we can always find a non-zero traceless Hermitian operator $H\in \mathcal{M}$,and then another solution $\rho^{\prime}=\rho-\lambda H$  to the same instance with rank less than $\rank\rho$ . Thus , by repeating this procedure until above quadratic inequality doesn't hold anymore, we will finally end with a solution $\sigma$ with $\rank\sigma \leq \lfloor\sqrt{\sum_{i=1}^{k}{\mathop{(\rank\rho_{i})^2}}}\rfloor$.
\end{proof}

\begin{corollary}
For any instance of the local consistency problem with given local systems $\{I_{i}\}_{i=1}^{k}$, if the solution set is nonempty, then there is a solution with rank no more than $\lfloor\sqrt{\sum_{i=1}^{k}{{(\dim{I}_i)^2}}} \rfloor$.
\end{corollary}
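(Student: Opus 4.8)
The plan is to read this off directly from Theorem~\ref{thm:main}. Write the instance as $\{(\rho_{i},I_{i})\}_{i=1}^{k}$; by hypothesis its solution set is nonempty, so the instance is compatible and Theorem~\ref{thm:main} supplies a solution $\sigma$ with $\rank\sigma\le\lfloor\sqrt{\sum_{i=1}^{k}(\rank\rho_{i})^{2}}\rfloor$.

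The one remaining step is to discard the dependence on the actual marginals. Since each $\rho_{i}$ is a density operator on the Hilbert space $\mathcal{H}_{I_{i}}$, whose dimension is $\dim I_{i}$, we have $\rank\rho_{i}\le\dim I_{i}$. The map $(t_{1},\dots,t_{k})\mapsto\sqrt{\sum_{i}t_{i}^{2}}$ is nondecreasing in each argument, so $\sqrt{\sum_{i=1}^{k}(\rank\rho_{i})^{2}}\le\sqrt{\sum_{i=1}^{k}(\dim I_{i})^{2}}$, and because $\rank\sigma$ is a nonnegative integer it is at most $\lfloor\sqrt{\sum_{i=1}^{k}(\dim I_{i})^{2}}\rfloor$, as claimed.

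There is no genuine obstacle here: the content is simply that the rank bound can be restated in terms of the coarse data $\{I_{i}\}_{i=1}^{k}$ alone, which is exactly what makes it convenient in the settings mentioned above (e.g. knowing all $c$-particle marginals of a state on $\mathcal{H}^{\otimes n}$, where the bound becomes polynomial in $n$ for fixed $c$). If one prefers, the same conclusion can be phrased without naming the $\rho_{i}$ at all: any nonempty solution set contains, by Theorem~\ref{thm:main}, a solution of rank at most $\lfloor\sqrt{\sum_{i=1}^{k}(\dim I_{i})^{2}}\rfloor$.
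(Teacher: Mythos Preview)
Your proof is correct and is exactly the intended argument: the paper states the corollary immediately after Theorem~\ref{thm:main} without a separate proof, since it follows at once from $\rank\rho_{i}\le\dim I_{i}$ and the integrality of the rank.
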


\begin{remark}\label{convex}
Barvinok proved that if there is a positive semidefinite matrix $X$ satisfying
\begin{eqnarray}
\langle Q_{i}, X \rangle =q_{i}, \qquad\forall 1\leq i\leq k
\end{eqnarray}
where $Q_{1},\cdots,Q_{k}$ are symmetric matrices and $q_{1},\cdots, q_{k}$ are complex numbers, then there is a positive semidefinite matrix $X^{\ast}$ satisfying the same equation system and additionally $\rank X^{\ast}\leq \lfloor \frac{\sqrt{8k}-1}{2}\rfloor$\cite{B95}.
The main ingredients of his proof are the duality for linear programming in the quadratic form space. After applying Barvinok's theorem to the local consistency problem, we will have a similar rank reduction which will lead to a solution with rank no more than $\sqrt{2\sum_{i=1}^{k}{{\dim^2}{I}_i}}$. Thus this result is weaker than ours.
\end{remark}


\section{Application: $N$-Representability Problem} \label{sec:nrep}
In this section, we will study the $N$-representability problem, which is a fermionic analogue of the local consistency problem. The bosonic version of $N$-representability is also addressed later.

We first restate the $N$-representability problem as follows.

\begin{problem}
Given a system of $N$ fermions where each particle has $d$ energy levels, and a $k$-fermion state $\rho$ of size ${d \choose k}\times {d\choose k}$, determine
whether there exists an $N$-fermion state $\sigma$ such that $\tr_{k+1,\cdots,N}(\sigma)=\rho$.
\end{problem}

According to the Pauli exclusion principle, no two particles can occupy the same state, thus we can always assume $d\geq N$.

The space of $N$-fermion pure states is mathematically described as the $N$-th antisymmetric tensor product  of $\mathbb{C}_{d}$ with dimension ${d\choose N}$, denoted as $\wedge^{N}\mathbb{C}_{d}$. It is the span of all $N$-fold antisymmetric tensor products of vectors $x_{1}$, $x_{2}$, $\cdots$, $x_{N}$ in $\mathbb{C}_{d}$ which is defined as
\begin{eqnarray}
x_{1}\wedge x_{2}\wedge \cdots \wedge x_{N}=\frac{1}{\sqrt{N!}}\sum_{P}\varepsilon_{P}x_{P(1)}\otimes x_{P(2)}\otimes \cdots \otimes x_{P(N)}.
\end{eqnarray}
Here, $P$ goes through all permutations of $N$ indices and $\varepsilon_{P}$ is the signature of $P$.  So $\varepsilon_{P}$ is $1$, if the number of even-order cycles in $P$'s cycle type is even, and $-1$ otherwise. 

Similarly, the space of $N$-boson pure states with $d$ energy levels corresponds to the $N$-th symmetric tensor product of $\mathbb{C}_{d}$ with dimension $N+d-1 \choose N$, denoted as $\vee^{N}\mathbb{C}_{d}$.  

For more information about $N$-th symmetric/antisymmetric tensor product, please refer to \cite{B97}.

For the $N$-representability problem, there is a similar rank reduction as follows.

\begin{theorem}\label{thm:fermion}
Suppose we are given a system of $N$ fermions where each particle has $d$ energy levels and a $k$-fermion density operator $\rho$ of size ${d \choose k}\times {d\choose k}$.
Assume there exists an $N$-fermion state $\sigma$ such that $\tr_{k+1,\cdots,N}(\sigma)=\rho$. Then there also exists an $N$-fermion density operator
$\sigma^{\prime}$  with  $\tr_{k+1,\cdots,N}(\sigma^{\prime})=\rho$ and $\rank{\sigma^{\prime}}\leq  \rank \rho\leq {d\choose k}$.
\end{theorem}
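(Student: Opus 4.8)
The plan is to specialize the rank-reduction argument of Theorem~\ref{thm:main} to the antisymmetric setting, where a single $k$-particle marginal is prescribed. First I would observe that the set $\wedge^N\mathbb{C}_d$ sits inside $(\mathbb{C}_d)^{\otimes N}$, and that an $N$-fermion state $\sigma$ is supported on this antisymmetric subspace; I want the perturbations used in the rank-reduction step to preserve this support. So the natural analogue of the space $\mathcal{M}$ in the proof of Theorem~\ref{thm:main} is
\begin{align}
\mathcal{N}=\{X\in B(\supp\sigma):\ \Pi\,(\tr_{k+1,\dots,N}X)\,\Pi=O\},
\end{align}
where $\Pi$ projects onto $\supp\rho$. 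Since we only have one constraint (the single marginal $\rho$), the dimension count gives $\dim\mathcal{N}\geq (\rank\sigma)^2-(\rank\rho)^2$, so as long as $\rank\sigma>\rank\rho$ this subspace is nontrivial.

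The steps then mirror the main theorem. Pick a nonzero $X\in\mathcal{N}$; then $X^\dagger\in\mathcal{N}$ too, so one of the Hermitian operators $H_1=X+X^\dagger$ or $H_2=i(X-X^\dagger)$ is a nonzero Hermitian element of $\mathcal{N}$. Call it $H$. Because $H\in B(\supp\sigma)$, for small $\epsilon$ we have $\sigma\pm\epsilon H\geq 0$, hence $\rho\pm\epsilon\tr_{k+1,\dots,N}(H)\geq 0$, which forces $\tr_{k+1,\dots,N}(H)$ to lie in $B(\supp\rho)$; combined with $\Pi(\tr_{k+1,\dots,N}H)\Pi=O$ this gives $\tr_{k+1,\dots,N}(H)=O$, and in particular $\tr H=0$. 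A nonzero traceless Hermitian $H$ has both positive and negative eigenvalues, so there is a finite $\lambda>0$ for which $\sigma-\lambda H$ is positive semidefinite but singular; it is still supported on $\wedge^N\mathbb{C}_d$ (since $H\in B(\supp\sigma)$), has the same $k$-marginal $\rho$, and has strictly smaller rank. Iterating, the rank strictly decreases as long as it exceeds $\rank\rho$, so we terminate with some $\sigma'$ satisfying $\tr_{k+1,\dots,N}(\sigma')=\rho$ and $\rank\sigma'\leq\rank\rho$. The trivial bound $\rank\rho\leq{d\choose k}$ is just the dimension of the $k$-fermion space.

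The one point that needs genuine care — and what I expect to be the main obstacle — is the reduction $\Pi(\tr_{k+1,\dots,N}H)\Pi=O \Rightarrow \tr_{k+1,\dots,N}H=O$ in the antisymmetric setting. In the general problem this used $\rho_i\pm\epsilon\tr_{I_i^c}(H)\geq0$ to conclude that $\tr_{I_i^c}(H)$ is supported inside $\supp\rho_i$, so sandwiching by $\Pi_i$ loses nothing; here I must make sure the partial trace $\tr_{k+1,\dots,N}$ of an operator supported on $\wedge^N\mathbb{C}_d$ behaves correctly and that the positivity argument still pins the support of $\tr_{k+1,\dots,N}(H)$ to $\supp\rho$. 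Once that is in place, everything else is a direct transcription of the proof of Theorem~\ref{thm:main} with $k=1$ constraint. The bosonic case is identical with $\vee^N\mathbb{C}_d$ in place of $\wedge^N\mathbb{C}_d$.
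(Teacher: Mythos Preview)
Your proposal is correct and follows exactly the paper's approach, which simply says to rerun the proof of Theorem~\ref{thm:main} inside $\wedge^N\mathbb{C}_d$ (respectively $\vee^N\mathbb{C}_d$) with the single marginal constraint. The concern you flag is not an obstacle: since $H\in B(\supp\sigma)\subseteq B(\wedge^N\mathbb{C}_d)$, the positivity $\sigma\pm\epsilon H\ge0$ partial-traces to $\rho\pm\epsilon\,\tr_{k+1,\dots,N}(H)\ge0$ exactly as in the general case, forcing $\tr_{k+1,\dots,N}(H)\in B(\supp\rho)$ and hence $\Pi(\tr_{k+1,\dots,N}H)\Pi=\tr_{k+1,\dots,N}H$.
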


The proof is similar to the proof provided in Section~\ref{sec:proof}, with minor modifications. Observe that the whole rank reduction in our approach is processed in $\supp(\rho)$. After introducing additional symmetry to the global system, the rank reduction also works by replacing $\otimes ^{N} \mathbb{C}_{d}$ with $\wedge^{N}\mathbb{C}_{d}$ or $\vee^{N}\mathbb{C}_{d}$.

Similarly, we will also get the following theorem for the bosonic version.
\begin{theorem}\label{thm:boson}
Suppose we have a system of $N$ bosons where each particle has $d$ energy levels and a $k$-boson density operator $\rho$ of size ${d+k-1 \choose k}\times {d+k-1\choose k}$.
Assume there exists an $N$-boson state $\sigma$ such that $\tr_{k+1,\cdots,N}(\sigma)=\rho$,
then there also exists an $N$-boson density operator
$\sigma^{\prime}$  with  $\tr_{k+1,\cdots,N}(\sigma^{\prime})=\rho$ and $\rank{\sigma^{\prime}}\leq  \rank \rho \leq {d+k-1\choose k}$ .
\end{theorem}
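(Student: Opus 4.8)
My plan is to re-run the proof of Theorem~\ref{thm:main} almost verbatim, exploiting the observation already made there that the entire rank-reduction procedure stays inside $B(\supp\sigma)$ and therefore never leaves whatever subspace of $(\mathbb{C}_{d})^{\otimes N}$ carries the starting solution. So I would begin with an $N$-boson solution $\sigma$: a density operator whose support lies in the symmetric power $\vee^{N}\mathbb{C}_{d}\subseteq(\mathbb{C}_{d})^{\otimes N}$, with $\tr_{k+1,\cdots,N}(\sigma)=\rho$. Writing the spectral decomposition $\sigma=\sum_{j}p_{j}\ket{\psi_{j}}\bra{\psi_{j}}$, every $\ket{\psi_{j}}$ lies in $\vee^{N}\mathbb{C}_{d}$. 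Letting $\Pi$ denote the projector onto $\supp\rho$, I would consider
\begin{align}
\mathcal{M}=\{X\in B(\supp\sigma):\ \Pi\,(\tr_{k+1,\cdots,N}X)\,\Pi=O\}.
\end{align}
Because $X\mapsto\Pi(\tr_{k+1,\cdots,N}X)\Pi$ is linear from $B(\supp\sigma)$, of dimension $(\rank\sigma)^{2}$, into $B(\supp\rho)$, of dimension $(\rank\rho)^{2}$, the space $\mathcal{M}$ has dimension at least $(\rank\sigma)^{2}-(\rank\rho)^{2}$; note that only these two operator algebras enter the count, so passing from $(\mathbb{C}_{d})^{\otimes N}$ to $\vee^{N}\mathbb{C}_{d}$ changes nothing here.

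Next, assuming $\rank\sigma>\rank\rho$ so that $\mathcal{M}\neq\{O\}$, I would pick a nonzero $X\in\mathcal{M}$; then $X^{\dagger}\in\mathcal{M}$ as well (the defining condition is stable under taking adjoints since $\tr_{k+1,\cdots,N}$ commutes with $\dagger$), so one of $H_{1}=X+X^{\dagger}$, $H_{2}=i(X-X^{\dagger})$ is a nonzero traceless Hermitian element of $\mathcal{M}$; call it $H$. Since $H\in B(\supp\sigma)$ there is $\epsilon>0$ with $\sigma\pm\epsilon H\geq 0$, whence $\rho\pm\epsilon\,\tr_{k+1,\cdots,N}(H)\geq 0$; this forces $\tr_{k+1,\cdots,N}(H)$ to be supported on $\supp\rho$, and together with $H\in\mathcal{M}$ it yields $\tr_{k+1,\cdots,N}(H)=O$, in particular $\Tr H=0$. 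As $H\neq 0$ is traceless it has eigenvalues of both signs, so $\sigma-\lambda H$ is not positive for $\lambda\gg 1$ while $\sigma-0\cdot H\geq 0$; at the critical value $\lambda^{\ast}>0$ the operator $\sigma^{\prime}=\sigma-\lambda^{\ast}H$ is positive semidefinite but singular on $\supp\sigma$, so $\rank\sigma^{\prime}<\rank\sigma$. Moreover $\Tr\sigma^{\prime}=1$, $\tr_{k+1,\cdots,N}(\sigma^{\prime})=\rho$, and --- the only point specific to bosons --- $\sigma^{\prime}$ is again an $N$-boson state because $\supp\sigma^{\prime}\subseteq\supp\sigma\subseteq\vee^{N}\mathbb{C}_{d}$. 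Iterating this replacement terminates after finitely many steps with a bosonic solution of rank at most $\rank\rho$, and $\rank\rho\leq\binom{d+k-1}{k}$ is immediate since $\rho$ acts on the $\binom{d+k-1}{k}$-dimensional space $\vee^{k}\mathbb{C}_{d}$.

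The step I would be most careful about is the symmetry bookkeeping, but it turns out to be painless for exactly the reasons above: the dimension estimate for $\mathcal{M}$ sees only $B(\supp\sigma)$ and $B(\supp\rho)$, which are insensitive to whether one sits inside $(\mathbb{C}_{d})^{\otimes N}$ or $\vee^{N}\mathbb{C}_{d}$; the positivity/convexity argument produces an $H$ supported on $\supp\sigma$, so every intermediate $\sigma^{\prime}$ automatically respects Bose symmetry; and the fact that $\tr_{k+1,\cdots,N}$ sends symmetric operators to symmetric operators is precisely what makes the hypothesis $\tr_{k+1,\cdots,N}(\sigma)=\rho$ well posed with $\rho$ supported on $\vee^{k}\mathbb{C}_{d}$. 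Reading $\wedge$ for $\vee$ throughout, the identical argument also re-proves the fermionic Theorem~\ref{thm:fermion}.
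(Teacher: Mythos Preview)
Your proposal is correct and follows precisely the approach the paper indicates: re-running the rank-reduction argument of Theorem~\ref{thm:main} while observing that, since every step takes place in $B(\supp\sigma)\subseteq B(\vee^{N}\mathbb{C}_{d})$, Bose symmetry is automatically preserved. The paper itself does not spell this out beyond remarking that one replaces $\otimes^{N}\mathbb{C}_{d}$ by $\vee^{N}\mathbb{C}_{d}$ (or $\wedge^{N}\mathbb{C}_{d}$), so your write-up is in fact more detailed than the original.
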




\section{Local Consistency Problem for Quantum Channels}\label{sec:channel}

In this section, we will investigate a new type of consistency -- the consistency of quantum channels. A quantum channel is a device which transmits classical bits or quantum states. Mathematically, it is a linear map which maps any quantum state on some Hilbert space $\mathcal{H}_{1}$ to another state on some Hilbert space $\mathcal{H}_{2}$. Furthermore, a quantum channel can be described by a completely-positive, trace-preserving map $\Phi$.

Generally, for a quantum channel from system $\mathcal{H}_{1}$ to $\mathcal{H}_{2}$, we shall think of $\mathcal{H}_1$ as part of a closed composite system $\mathcal{H}_1\otimes \mathcal{H}_1^{\prime}$ and $\mathcal{H}_2$ as part of another closed composite system $\mathcal{H}_2\otimes \mathcal{H}_2^{\prime}$ which has same dimension as $\mathcal{H}_1\otimes \mathcal{H}_1^{\prime}$. Therefore, the evolution from $\mathcal{H}_1\otimes \mathcal{H}_1^{\prime}$ to $\mathcal{H}_2\otimes \mathcal{H}_2^{\prime}$ can be described by some unitary operator $U$. The quantum channel $\Phi$ is then described as
\begin{eqnarray}
\Phi(\rho)=\tr_{\mathcal{H}_2^{\prime}}(U(\rho\otimes \ket{0}\bra{0}_{\mathcal{H}_1^{\prime}})U^{\dagger}).
\end{eqnarray}

By Stinespring's dilation theorem on completely positive maps,
$\Phi$ must take following form
\begin{eqnarray}
\Phi(X)=\sum_{i=1}^N K_i A K_i^{\dagger}
\end{eqnarray}
where $K_i's$ are some operators, called Kraus operators of $\Phi$. Trace preservation of $\Phi$ is equivalent to the sum $\sum_{i=1}^N K_i^{\dagger}K_i$ equaling the identity operator. The number of Kraus operators $N$ is no more than $\dim\mathcal{H}_1\dim\mathcal{H}_2$, and the minimum number of $N$ is called the Kraus rank of $\Phi$. In some sense, the smaller the Kraus rank is, the simpler the channel is.

The concept of channel consistency is quite intuitive. Consider the following scenario, in which there is a channel from some large system $\mathcal{H}_{1}$ to some large system $\mathcal{H}_{2}$. Here, the mapping is from  $B(\mathcal{H}_{1})$ to $B(\mathcal{H}_{2})$.  A local observer Alice can only gather information from part of $\mathcal{H}_{1}$, say $\mathcal{H}_{1}^{A}$ and part of $\mathcal{H}_{2}$, say $\mathcal{H}_{2}^{A}$; therefore, she has information about the partial mapping from $B(\mathcal{H}_{1}^{A})$ to $B(\mathcal{H}_{2}^{A})$.  Another observer Bob has his information about the partial mapping from some $B(\mathcal{H}_{1}^{B})$ to some  $B(\mathcal{H}_{2}^{B})$. So do any other observers. Several questions naturally arise. How much information about the global quantum channel can be known when Alice, Bob and other observers disclose their local information? If every observer has a description of some partial mapping, is there a global channel satisfying all these local constraints? Can we find some simple channel satisfying local constraints if the local descriptions are known to be compatible?

There are some subtle differences between state consistency and channel consistency. The most confusing part is, how to describe part of a quantum channel, or the partial mapping from a local system to another local system? There is no doubt that part of a quantum state $\rho$, described by  applying some partial trace on $\rho$,  is definitely positive-semidefinite and trace $1$. Therefore, part of a quantum state is again a quantum state. However, in the quantum channel setting, the analogue of the above property is not so straightforward. One may even doubt part of a quantum channel may not be a quantum channel at all. Thanks to the channel-state duality, we can define sub-channel of some quantum channel as the following. 

Given any channel $\Psi:B(\mathcal{H}_{1})\rightarrow B(\mathcal{H}_{2})$, we can always write a corresponding state of $\Psi$ to be
\begin{eqnarray}
\sigma_{\Psi}=\frac{1}{\dim \mathcal{H}_1}\sum_{p,q=1}^{\dim\mathcal{H}_{1}}\ket{p}\bra{q}\otimes \Psi(\ket{p}\bra{q})
\end{eqnarray}
where $\{\ket{i}\}_{i=1}^{\dim\mathcal{H}_{1}}$ is an orthonormal basis of $\mathcal{H}_{1}$.
The state $\sigma_{\Psi}$ is called the Choi-Jamiolkowski state of $\Psi$ and the association above defines an isomorphism between linear maps from $B(\mathcal{H}_{1})$ to $B(\mathcal{H}_{2})$ and operators in $B(\mathcal{H}_{1}\otimes \mathcal{H}_{2})$, called the Choi-Jamiolkowski isomorphism. Its rank is equal to the Kraus rank of $\sigma_{\Psi}$.

Therefore, for any quantum channel  $\Psi: B(\mathcal{H}_{A}\otimes \mathcal{H}_B) \rightarrow B(\mathcal{H}_{A^{\prime}}\otimes \mathcal{H}_{B^{\prime}})$, we can define  channel $\Psi^A: B(\mathcal{H}_A) \rightarrow B(\mathcal{H}_A^{\prime})$ by taking the reduced density operator of the Choi-Jamiolkowski state $\sigma_{\Psi}$ as its Cho-Jamiolkowski state $\sigma_{\Psi^A}$.

Observe that 
\begin{eqnarray}
\Psi^A(\rho)&=&\tr_A \big( (\frac{1}{\dim \mathcal{H}_{AB}}\sum\limits_{p,q=1}^{\dim\mathcal{H}_{AB}}\tr_B\ket{p}\bra{q}\otimes \tr_{B^{\prime}}\Psi(\ket{p}\bra{q}))(\rho\otimes I_{A^{\prime}}) \big )\\
&=&\frac{1}{\dim \mathcal{H}_{AB}}\sum\limits_{p,q=1}^{\dim\mathcal{H}_{AB}} \tr_{AB} \big ( (\ket{p}\bra{q}(\rho\otimes I_B))\otimes \tr_{B^{\prime}}\Psi(\ket{p}\bra{q})\big )\\
&=& \tr_B \Psi(\rho\otimes \frac{I_B}{\dim\mathcal{H}_B}),
\end{eqnarray}
$\Psi^A$ acts exactly the same as $\Psi$ does between $ B(\mathcal{H}_A)$ and $B(\mathcal{H}_A^{\prime})$.  Hence we call $\Psi^A$ sub-channel of $\Psi$ from  $ B(\mathcal{H}_A)$ to $B(\mathcal{H}_A^{\prime})$.


By adopting above definition, we will address the following question: how simple the global channel can be, or more specifically, how small its Kraus rank can be if the sub channels are known to be compatible.
Mathematically,  the local consistency problem for quantum channels can be stated as follows.

\begin{problem}\label{prb:channel}
Consider two multipartite quantum systems $\mathcal{A}_1\mathcal{A}_2\cdots \mathcal{A}_n$ and $\mathcal{B}_{1}\mathcal{B}_{2}\cdots \mathcal{B}_{m}$ with Hilbert spaces $\mathcal{H}_{\mathcal{A}_{1}\mathcal{A}_{2}\cdots \mathcal{A}_{n}}=\mathcal{H}_{\mathcal{A}_{1}}\otimes \mathcal{H}_{\mathcal{A}_{2}}\otimes \cdots \otimes \mathcal{H}_{\mathcal{A}_{n}}$ and $\mathcal{H}_{\mathcal{B}_{1}\mathcal{B}_{2}\cdots \mathcal{B}_{m}}=\mathcal{H}_{\mathcal{B}_{1}}\otimes \mathcal{H}_{\mathcal{B}_{2}}\otimes \cdots \otimes \mathcal{H}_{\mathcal{B}_{m}}$ respectively. Assume a set of local quantum channels $\{\Phi_{l}: l=1,2,\cdots,k\}$ is given. Each $\Phi_l$ maps states on Hilbert space $\mathcal{H}_{I_{l}}=\mathop{\otimes }_{\mathcal{A}_{i}\in I_{l}}\mathcal{H}_{\mathcal{A}_{i}}$ to states on Hilbert space $\mathcal{H}_{J_{l}}=\mathop{\otimes}_{\mathcal{B}_{j}\in J_{l}}\mathcal{H}_{\mathcal{B}_{j}}$ where $I_{l}\subseteq \{\mathcal{A}_{1},\mathcal{A}_{2},\cdots,\mathcal{A}_{n}\}$ and $J_{l}\subseteq \{\mathcal{B}_{1},\mathcal{B}_{2},\cdots,\mathcal{B}_{m}\}$. The local consistency problem for quantum channels is to address the existence of a quantum channel $\Phi:B(\mathcal{H}_{\mathcal{A}_{1}\mathcal{A}_{2}\cdots \mathcal{A}_{n}})\rightarrow B(\mathcal{H}_{\mathcal{B}_{1}\mathcal{B}_{2}\cdots \mathcal{B}_{m}})$  satisfying $\Phi_{l}(\rho)=\tr_{{J}_{l}^{c}}\Phi(\rho\otimes \frac{I}{\dim \mathcal{H}_{I_{l}^{c}} })$ for any $\rho\in D(\mathcal{H}_{I_{l}})$ for any $1\leq l\leq k$.

\end{problem}

By taking the Choi-Jamiolkowski states of each $\Phi_l$, Problem~\ref{prb:channel} can be converted to the existence of the global density operator. Then we can apply Theorem~\ref{thm:main} to the quantum system $\mathcal{A}_1\mathcal{A}_2\cdots \mathcal{A}_n\mathcal{B}_{1}\mathcal{B}_{2}\cdots \mathcal{B}_{m}$ with Hilbert space $\mathcal{H}_{\mathcal{A}_{1}\mathcal{A}_{2}\cdots \mathcal{A}_{n}\mathcal{B}_{1}\mathcal{B}_{2}\cdots \mathcal{B}_{m}}=\mathcal{H}_{\mathcal{A}_{1}}\otimes \mathcal{H}_{\mathcal{A}_{2}}\otimes \cdots \otimes \mathcal{H}_{\mathcal{A}_{n}}\otimes \mathcal{H}_{\mathcal{B}_{1}}\otimes \mathcal{H}_{\mathcal{B}_{2}}\otimes\cdots\otimes \mathcal{H}_{\mathcal{B}_{m}}$.  Since a set of reduced density operators $\sigma_{\Phi_{l}}(l=1,2,\cdots,k)$ is given and each $\sigma_{\Phi_{l}}$ acts on a subsystem $I_{l}J_{l}\subseteq \{ \mathcal{A}_1,\mathcal{A}_2,\cdots, \mathcal{A}_n,\mathcal{B}_{1},\mathcal{B}_{2},\cdots, \mathcal{B}_{m}\}$. According to Theorem~\ref{thm:main}, we can always find a density operator $\sigma$ satisfying the local consistencies and $\textstyle \rank{\sigma}\leq {\sqrt{\sum_l {(\dim{I_l}\dim{J_{l}})}^2}}$.

\begin{theorem}\label{channel}
Assume there exists a global quantum channel $\Phi:B(\mathcal{H}_{\mathcal{A}_{1}\mathcal{A}_{2}\cdots \mathcal{A}_{n}})\rightarrow B(\mathcal{H}_{\mathcal{B}_{1}\mathcal{B}_{2}\cdots \mathcal{B}_{m}})$  satisfying $\Phi_{l}(\rho)=\tr_{{J}_{l}^{c}}\Phi(\rho\otimes \frac{I}{\dim \mathcal{H}_{I_{l}^{c}}})$ for any $\rho\in D(\mathcal{H}_{I_{l}})$ for any $1\leq l\leq k$. Then there also exists a quantum channel $\Phi^{\prime}$ satisfying the same local constraints such that $\Phi^{\prime}$ can be expressed with no more than $\textstyle{\sqrt{\sum_l {(\dim{I_l}\dim{J_{l}})}^2}} $ Kraus operators.
\end{theorem}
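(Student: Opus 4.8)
The plan is to reduce Theorem~\ref{channel} to Theorem~\ref{thm:main} through the Choi-Jamiolkowski isomorphism, along the lines already sketched in the paragraph preceding the statement. Concretely, I would start from the assumed global channel $\Phi$ and pass to its Choi-Jamiolkowski state $\sigma_{\Phi}\in B(\mathcal{H}_{\mathcal{A}_1\cdots\mathcal{A}_n}\otimes\mathcal{H}_{\mathcal{B}_1\cdots\mathcal{B}_m})$. By the discussion of sub-channels above, the hypothesis $\Phi_l(\rho)=\tr_{J_l^c}\Phi(\rho\otimes I/\dim\mathcal{H}_{I_l^c})$ is equivalent to the reduced-state identities $\tr_{(I_lJ_l)^c}\sigma_{\Phi}=\sigma_{\Phi_l}$ for every $l$, where $\sigma_{\Phi_l}$ is the Choi-Jamiolkowski state of $\Phi_l$ and $\rank\sigma_{\Phi_l}$ equals the Kraus rank of $\Phi_l$, hence is at most $\dim\mathcal{H}_{I_l}\dim\mathcal{H}_{J_l}=\dim I_l\dim J_l$. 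Thus $\sigma_{\Phi}$ is a compatible global solution for the instance $\{(\sigma_{\Phi_l},I_lJ_l)\}_{l=1}^{k}$, and Theorem~\ref{thm:main} produces another solution $\sigma'$ of the same instance with $\rank\sigma'\le\sqrt{\sum_l(\dim I_l\dim J_l)^2}$. Reading $\sigma'$ back through the isomorphism gives a completely positive map $\Phi'$ that agrees with each $\Phi_l$ as a sub-channel and whose Kraus rank is $\rank\sigma'$; this is the candidate channel.

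The one thing the bare application of Theorem~\ref{thm:main} does not guarantee is that $\Phi'$ is trace preserving, i.e.\ that $\sigma'$ satisfies the affine condition $\tr_{\mathcal{B}_1\cdots\mathcal{B}_m}\sigma'=I/\dim\mathcal{H}_{\mathcal{A}_1\cdots\mathcal{A}_n}$. This holds for $\sigma_{\Phi}$ but need not survive the rank-reduction step, so I would re-run that step with the condition built in. Following Section~\ref{sec:proof}, I would use the perturbation space $\mathcal{M}=\{X\in B(\supp\sigma):\Pi_l(\tr_{(I_lJ_l)^c}X)\Pi_l=O\ \forall l,\ \tr_{\mathcal{B}_1\cdots\mathcal{B}_m}X=O\}$. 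Exactly as in the original argument, a nonzero $X\in\mathcal{M}$ yields a nonzero traceless Hermitian $H\in\mathcal{M}$ with $\sigma\pm\epsilon H\ge 0$ for small $\epsilon$; the resulting inequalities $\sigma_{\Phi_l}\pm\epsilon\tr_{(I_lJ_l)^c}H\ge 0$ force $\tr_{(I_lJ_l)^c}H=O$ for all $l$, while $\tr_{\mathcal{B}_1\cdots\mathcal{B}_m}H=O$ holds by construction. Hence for the critical $\lambda>0$ the operator $\sigma-\lambda H$ is a lower-rank density operator that still reproduces every $\sigma_{\Phi_l}$ and still has a maximally mixed $\mathcal{A}$-marginal, so iterating keeps trace preservation intact down to the final $\sigma'$ and $\Phi'$ is a genuine channel.

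The step I expect to be the main obstacle is the bookkeeping for this enlarged $\mathcal{M}$. The extra constraint $\tr_{\mathcal{B}_1\cdots\mathcal{B}_m}X=O$ naively trims $\dim\mathcal{M}$ by up to $(\dim\mathcal{H}_{\mathcal{A}_1\cdots\mathcal{A}_n})^2$, which would only yield the weaker estimate $\rank\sigma'\le\sqrt{\sum_l(\dim I_l\dim J_l)^2+(\dim\mathcal{H}_{\mathcal{A}_1\cdots\mathcal{A}_n})^2}$. Matching the bound as stated requires showing that the trace-preservation constraint costs nothing extra in the regime that matters --- for instance by noting that no reduction is needed once $\rank\sigma$ has already dropped below $\dim\mathcal{H}_{\mathcal{A}_1\cdots\mathcal{A}_n}$, or by comparing directly the ranks of the linear maps $X\mapsto\tr_{\mathcal{B}_1\cdots\mathcal{B}_m}X$ and $X\mapsto\bigoplus_l\Pi_l(\tr_{(I_lJ_l)^c}X)\Pi_l$ on $B(\supp\sigma)$ and checking they are not independent. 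Once that is in hand, translating the final low-rank trace-preserving $\sigma'$ back through the Choi-Jamiolkowski isomorphism delivers $\Phi'$ with at most $\sqrt{\sum_l(\dim I_l\dim J_l)^2}$ Kraus operators.
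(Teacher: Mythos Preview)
Your reduction via the Choi--Jamio\l{}kowski isomorphism and Theorem~\ref{thm:main} is exactly what the paper does: its entire argument for Theorem~\ref{channel} is the paragraph immediately preceding the statement, which applies Theorem~\ref{thm:main} to the instance $\{(\sigma_{\Phi_l},I_lJ_l)\}_{l=1}^k$ on the joint system $\mathcal{A}_1\cdots\mathcal{A}_n\mathcal{B}_1\cdots\mathcal{B}_m$ and reads off the rank bound.

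You have actually been more careful than the paper. The paper does \emph{not} address trace preservation at all; it simply passes from the low-rank $\sigma$ to a ``quantum channel'' $\Phi'$ without checking that $\tr_{\mathcal{B}_1\cdots\mathcal{B}_m}\sigma = I/\dim\mathcal{H}_{\mathcal{A}_1\cdots\mathcal{A}_n}$. Your observation that the marginal constraints $\tr_{(I_lJ_l)^c}\sigma=\sigma_{\Phi_l}$ only pin down the $I_l$-marginals, not the full $\mathcal{A}$-marginal, is correct, so this is a genuine gap you have spotted in the paper's own argument, not just in yours.

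That said, your two suggested workarounds do not close the gap: the map $X\mapsto\tr_{\mathcal{B}}X$ on $B(\supp\sigma)$ can still impose up to $r^2$ independent conditions when $r\le\dim\mathcal{H}_{\mathcal{A}}$, and there is no structural reason it should be redundant with $X\mapsto\bigoplus_l\Pi_l(\tr_{(I_lJ_l)^c}X)\Pi_l$. The honest fix is the one you already wrote: add $(\,I/\dim\mathcal{H}_{\mathcal{A}_1\cdots\mathcal{A}_n},\,\mathcal{A}_1\cdots\mathcal{A}_n)$ as a $(k{+}1)$-st marginal constraint before invoking Theorem~\ref{thm:main}, which cleanly yields a genuine channel $\Phi'$ with Kraus rank at most $\sqrt{\sum_l(\dim I_l\dim J_l)^2+(\dim\mathcal{H}_{\mathcal{A}_1\cdots\mathcal{A}_n})^2}$. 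This is slightly weaker than the bound printed in Theorem~\ref{channel}, but it is what the method actually proves; the paper's sharper constant appears to rely on an implicit and unjustified assumption that trace preservation survives the rank reduction for free.
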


\section{Some Examples}\label{sec:example}

\begin{example}
Consider an $n$-qubit quantum system $\mathcal{A}_1\mathcal{A}_2\cdots \mathcal{A}_n$ with Hilbert space $\mathcal{H}_{\mathcal{A}_{1}\mathcal{A}_{2}\cdots \mathcal{A}_{n}}=\mathbb{C}_{2}^{\otimes n}$.  We are interested in the $n$-qubit states $\rho$ such that any $k$-qubit local density operator of $\rho$ is $\frac{I_{k}}{2^{k}}$.

Obviously, $\rho=\frac{I_{n}}{2^{n}}$ is a trivial candidate with the maximal rank $2^{n}$.

From Theorem~$\ref{thm:main}$, there exists some $n$-qubit density operator $\rho$ satisfying same local consistency and $\rank(\rho) \in O(2^{k}\sqrt{{n\choose k}})$. As a corollary, when $k=2$ is fixed, then $\rank(\rho)\in O(n)$.

Indeed, for $k=2$, there are always pure state (i.e. $\text{rank}=1$) solutions for any $n\geq 5$. One such example could be a graph state~\cite{BR00,SW02} $\ket{\psi_n}$ on a ring, which is a common eigenstate of eigenvalue $1$ of the Pauli operators $g_i=\{Z_{i-1}X_iZ_{i+1}\}$ for $i=1,2,\ldots,n$, where $Z_0=Z_n,\ Z_{n+1}=Z_1$. That is,
$g_i\ket{\psi_n}=\ket{\psi_n}$ for $i=1,2,\ldots,n$. Note that
\begin{equation}
\rho_n=\ket{\psi_n}\bra{\psi_n}=\frac{1}{2^n}\prod_{i=1}^n(I+g_i).
\end{equation}
It is then straightforward to see that any $2$-local density operator of the $n$-particle state $\rho_n$ is $\frac{I_{2}}{2^{2}}$.

In general, for any fixed $k$, there do exist $n$-qubit graph states such that any $k$-local density operator of the graph state is $\frac{I_{k}}{2^{k}}$, for large enough $n$~\cite{Dan05}.

\end{example}

\begin{example}
Consider a system of $N$ bosons where each particle has $2$ energy levels.
The $2$-boson maximally mixed state is defined as
\begin{eqnarray}
M_{B}^{(2)}=\frac{1}{3}(\ket{00}\bra{00}+\ket{11}\bra{11}+\ket{\frac{01+10}{\sqrt{2}}}\bra{\frac{01+10}{\sqrt{2}}}).
\end{eqnarray}
Obviously there exists a non-degenerate $N$-boson maximally mixed operator $M_B^{(N)}$ such that $\tr_{3,\cdots,N}(M_B^{(N)})=M_{B}^{(2)}$.
Then it follows from Theorem~\ref{thm:boson} that there exists another $N$-boson density operator
$\sigma$  with  $\tr_{3,\cdots,N}(\sigma)=M_B^{(2)}$ and $\rank{\sigma}\leq 3$.

We can choose $\frac{N-1}{3}\leq p\leq \frac{2N+1}{3}$ and let  $\sigma_{p}$ to be
\begin{eqnarray}
\frac{3p+1-N}{6p}\ket{0\cdots 0}\bra{0\cdots 0}+\frac{2N-3p+1}{6(N-p)}\ket{1\cdots 1}\bra{1\cdots 1}+\frac{(\sum_{i_{1}+\cdots+ i_{N}=p}\ket{i_{1}i_{2}\cdots i_{N}})(\sum_{i_{1}+\cdots+ i_{N}=p}\bra{i_{1}i_{2}\cdots i_{N}})}{6{N-2\choose p-1}}.
\end{eqnarray}

Notice that
\begin{eqnarray}
&&\sum_{i_{1}+i_{2}+\cdots+i_{N}=p}\ket{i_{1}i_{2}\cdots i_{N}}\\
&=&\ket{00}\sum_{i_{3}+\cdots+i_{N}=p}\ket{i_{3}\cdots i_{N}}+(\ket{01}+\ket{10})\sum_{i_{3}+\cdots+i_{N}=p-1}\ket{i_{3}\cdots i_{N}}+\ket{11}\sum_{i_{3}+\cdots+i_{N}=p-2}\ket{i_{3}\cdots i_{N}},
\end{eqnarray}

and then
\begin{eqnarray}
&&\tr_{3,\cdots,N}(\sigma_{p})\\
&=&(\frac{3p+1-N}{6p}+\frac{ {N-2\choose p} }{ 6 {N-2\choose p-1} })\ket{00}\bra{00}+\frac{{N-2 \choose p-1}}{6{N-2 \choose p-1}} \ket{01+10}\bra{01+10}+(\frac{2N-3p+1}{6(N-p)}+\frac{{N-2\choose p-2}}{{N-2\choose p-1}})\ket{11}\bra{11}\\
&=&\frac{1}{3}(\ket{00}\bra{00}+\ket{11}\bra{11}+\ket{\frac{01+10}{\sqrt{2}}}\bra{\frac{01+10}{\sqrt{2}}}).
\end{eqnarray}

Therefore, $\{\sigma_{p}\}_{\frac{N-1}{3}\leq p\leq \frac{2N+1}{3}}$ is a family of $N$-boson density operators with rank $3$ and every $2$-local density operator of any $\sigma_{p}$ is the bosonic maximally mixed state $M_{B}^{(2)}$. Furthermore,  when $N\equiv 1(\mathop{\rm mod 3})$, we will have $\rank(\sigma_{\frac{N-1}{3}})=2$.

\end{example}

\section{Conclusion and Future Works}\label{sec:conclusion}

In this paper we addressed the problem of how simple a solution can be for any given local consistency instance. More specifically, how small the rank of a global density operator can be if the local constraints are known to be compatible. We provided a reduction based approach to this problem and proved that any compatible local density operators can be satisfied with a global density operator with bounded rank. Then we studied both fermionic and bosonic versions of the $N$-representability problem as applications. After applying the channel-state duality, we proved that  any compatible local channels can be satisfied with a global quantum channel which can be expressed with a small number of Kraus operators.

This paper represents a preliminary step toward understanding the structure of solutions to the local consistency problem. There are many open questions from this approach deserving further investigation. For example, though the local consistency is known to be QMA-complete in general,  efficient algorithms are still possible for some classes of instances. Since we know now the existence of solutions is equivalent to the existence of simple solutions, we can ask if it is  possible to find more efficient algorithms for these classes? Further, we could ask if only spectra or other descriptions are known for subsystems, how simple can a solution  be?


\subsection*{Acknowledgements}
JC is supported by NSERC and NSF of China(Grant No. 61179030). ZJ acknowledges support from NSERC, ARO and NSF of China (Grant Nos. 60736011 and
60721061). DWK is supported by NSERC Discovery Grant 400160, NSERC Discovery Accelerator Supplement 400233 and Ontario Early Researcher Award 048142. BZ is supported by NSERC Discovery Grant 400500 and CIFAR. Part of this work was done when JC was a PhD student with Prof. Mingsheng Ying in Tsinghua University. We thank Mingsheng Ying, Aram Harrow, John Watrous, Mary Beth Ruskai for very delightful discussions.



\begin{thebibliography}{16}
\expandafter\ifx\csname natexlab\endcsname\relax\def\natexlab#1{#1}\fi
\expandafter\ifx\csname bibnamefont\endcsname\relax
  \def\bibnamefont#1{#1}\fi
\expandafter\ifx\csname bibfnamefont\endcsname\relax
  \def\bibfnamefont#1{#1}\fi
\expandafter\ifx\csname citenamefont\endcsname\relax
  \def\citenamefont#1{#1}\fi
\expandafter\ifx\csname url\endcsname\relax
  \def\url#1{\texttt{#1}}\fi
\expandafter\ifx\csname urlprefix\endcsname\relax\def\urlprefix{URL }\fi
\providecommand{\bibinfo}[2]{#2}
\providecommand{\eprint}[2][]{\url{#2}}

\bibitem[{\citenamefont{Klyachko}(2004)}]{K04}
\bibinfo{author}{\bibfnamefont{A.}~\bibnamefont{Klyachko}},
  \bibinfo{journal}{arXiv: quant-ph/0409113}  (\bibinfo{year}{2004}).

\bibitem[{\citenamefont{Klyachko}(2006)}]{K05}
\bibinfo{author}{\bibfnamefont{A.}~\bibnamefont{Klyachko}},
  \bibinfo{journal}{J. Physics: Conf. Series} \textbf{\bibinfo{volume}{36}},
  \bibinfo{pages}{72} (\bibinfo{year}{2006}).

\bibitem[{\citenamefont{Altunbulak and Klyachko}(2008)}]{AK08}
\bibinfo{author}{\bibfnamefont{M.}~\bibnamefont{Altunbulak}} \bibnamefont{and}
  \bibinfo{author}{\bibfnamefont{A.}~\bibnamefont{Klyachko}},
  \bibinfo{journal}{Comm. Math. Phys.} \textbf{\bibinfo{volume}{282}},
  \bibinfo{pages}{287} (\bibinfo{year}{2008}).

\bibitem[{\citenamefont{Christandl and Mitchison}(2006)}]{CM06}
\bibinfo{author}{\bibfnamefont{M.}~\bibnamefont{Christandl}} \bibnamefont{and}
  \bibinfo{author}{\bibfnamefont{G.}~\bibnamefont{Mitchison}},
  \bibinfo{journal}{Commun. Math. Phys.} \textbf{\bibinfo{volume}{261}},
  \bibinfo{pages}{789} (\bibinfo{year}{2006}).

\bibitem[{\citenamefont{Coleman}(1963)}]{C63}
\bibinfo{author}{\bibfnamefont{A.~J.} \bibnamefont{Coleman}},
  \bibinfo{journal}{Rev. Mod. Phys.} \textbf{\bibinfo{volume}{35}},
  \bibinfo{pages}{668} (\bibinfo{year}{1963}).

\bibitem[{\citenamefont{Liu}(2006)}]{L06}
\bibinfo{author}{\bibfnamefont{Y.-K.} \bibnamefont{Liu}},
  \bibinfo{journal}{Proc. RANDOM} \textbf{\bibinfo{volume}{27}},
  \bibinfo{pages}{438} (\bibinfo{year}{2006}).

\bibitem[{\citenamefont{Liu et~al.}(2007)\citenamefont{Liu, Christandl, and
  Verstraete}}]{LCV07}
\bibinfo{author}{\bibfnamefont{Y.-K.} \bibnamefont{Liu}},
  \bibinfo{author}{\bibfnamefont{M.}~\bibnamefont{Christandl}},
  \bibnamefont{and}
  \bibinfo{author}{\bibfnamefont{F.}~\bibnamefont{Verstraete}},
  \bibinfo{journal}{Phys. Rev. Lett.} \textbf{\bibinfo{volume}{98}},
  \bibinfo{pages}{110503} (\bibinfo{year}{2007}).

\bibitem[{\citenamefont{Wei et~al.}(2010)\citenamefont{Wei, Mosca, and
  Nayak}}]{WMN10}
\bibinfo{author}{\bibfnamefont{T.-C.} \bibnamefont{Wei}},
  \bibinfo{author}{\bibfnamefont{M.}~\bibnamefont{Mosca}}, \bibnamefont{and}
  \bibinfo{author}{\bibfnamefont{A.}~\bibnamefont{Nayak}},
  \bibinfo{journal}{Phys. Rev. Lett.} \textbf{\bibinfo{volume}{104}},
  \bibinfo{pages}{040501} (\bibinfo{year}{2010}).

\bibitem[{\citenamefont{Linden and Wootters}(2002)}]{LW02}
\bibinfo{author}{\bibfnamefont{N.}~\bibnamefont{Linden}} \bibnamefont{and}
  \bibinfo{author}{\bibfnamefont{W.~K.} \bibnamefont{Wootters}},
  \bibinfo{journal}{Phys. Rev. Lett.} \textbf{\bibinfo{volume}{89}},
  \bibinfo{pages}{277906} (\bibinfo{year}{2002}).

\bibitem[{\citenamefont{{Smith}}(1965)}]{S65}
\bibinfo{author}{\bibfnamefont{D.~W.} \bibnamefont{{Smith}}},
  \bibinfo{journal}{\jcp} \textbf{\bibinfo{volume}{43}}, \bibinfo{pages}{S258}
  (\bibinfo{year}{1965}).

\bibitem[{\citenamefont{Jones and Linden}(2005)}]{JL05}
\bibinfo{author}{\bibfnamefont{N.~S.} \bibnamefont{Jones}} \bibnamefont{and}
  \bibinfo{author}{\bibfnamefont{N.}~\bibnamefont{Linden}},
  \bibinfo{journal}{Phys. Rev. A} \textbf{\bibinfo{volume}{71}},
  \bibinfo{pages}{012324} (\bibinfo{year}{2005}).

\bibitem[{\citenamefont{Barvinok}(1995)}]{B95}
\bibinfo{author}{\bibfnamefont{A.}~\bibnamefont{Barvinok}},
  \bibinfo{journal}{Disc. and Comp. Geo.} \textbf{\bibinfo{volume}{13}},
  \bibinfo{pages}{189} (\bibinfo{year}{1995}).

\bibitem[{\citenamefont{Bhatia}(1997)}]{B97}
\bibinfo{author}{\bibfnamefont{R.}~\bibnamefont{Bhatia}},
  \emph{\bibinfo{title}{Matrix Analysis}}
  (\bibinfo{publisher}{Springer-Verlag}, \bibinfo{address}{New York},
  \bibinfo{year}{1997}).

\bibitem[{\citenamefont{{Briegel} and {Raussendorf}}(2001)}]{BR00}
\bibinfo{author}{\bibfnamefont{H.~J.} \bibnamefont{{Briegel}}}
  \bibnamefont{and}
  \bibinfo{author}{\bibfnamefont{R.}~\bibnamefont{{Raussendorf}}},
  \bibinfo{journal}{Phys. Rev. Lett.} \textbf{\bibinfo{volume}{86}},
  \bibinfo{pages}{910} (\bibinfo{year}{2001}), \eprint{arXiv:quant-ph/0004051}.

\bibitem[{\citenamefont{{Schlingemann} and {Werner}}(2002)}]{SW02}
\bibinfo{author}{\bibfnamefont{D.}~\bibnamefont{{Schlingemann}}}
  \bibnamefont{and} \bibinfo{author}{\bibfnamefont{R.~F.}
  \bibnamefont{{Werner}}}, \bibinfo{journal}{Phys. Rev. A}
  \textbf{\bibinfo{volume}{65}}, \bibinfo{pages}{012308}
  (\bibinfo{year}{2002}).

\bibitem[{\citenamefont{{Eirik Danielsen}}(2005)}]{Dan05}
\bibinfo{author}{\bibfnamefont{L.}~\bibnamefont{{Eirik Danielsen}}}
  (\bibinfo{year}{2005}), \eprint{arXiv:quant-ph/0503236}.

\end{thebibliography}
\end{document}